\newtheorem{thrm}{Theorem}
\newtheorem{prop}{Proposition}
\newtheorem{prob}{Problem}
\theoremstyle{definition}
\newtheorem{df}{Definition}
\theoremstyle{plain}
\def\BState{\State\hskip-\ALG@thistlm}
\author[A. Polański]{Artur Polański}
\address{Institute of Computer Science and Computer Mathematics
\\Jagiellonian University in Cracow; Krak\'ow, Poland
}
\email{artur.polanski@uj.edu.pl}
\author[E. Lipka]{Eryk Lipka}
\address{Institute of Mathematics 
\\Pedagogical University of Cracow;  Krak\'ow, Poland
}
\email{eryklipka0@gmail.com}
\thanks{Research of the second author was supported by a grant of the National Science Centre, Poland, no. UMO-2019/34/E/ST1/00094}
\title[Detecting a single fault in a DFA]{Detecting a single fault in a deterministic finite automaton}
\begin{document}
\begin{abstract}
Given a DFA and its implementation with at most one single fault, that we can test on a set of inputs, we provide an algorithm to find a test set that guarantees finding whether the fault exists.
\end{abstract}
\keywords{finite automata; system testing; fault detection}

\maketitle

\section{Introduction}

In 2017 A. Roman considered in \cite{roman} the problem of finding a possible fault in an implementation of a DFA, providing a heuristic algorithm for finding a small test set recognizing whether the fault indeed exists. Unfortunately, that paper has a few errors and imprecisely described reasoning. We aim to correct those, as well as provide both an exponential time algorithm for finding the smallest possible test set, as well as a polynomial time heuristic algorithm that usually finds smaller test sets than one presented in \cite{roman}.

\section{Notation and terminology}

\begin{itemize}
\item For an alphabet $\Sigma$ we denote the set of words over $\Sigma$ by $\Sigma^*$.
\item We denote an empty word by $\varepsilon$.
\item We call a set o words a \textit{language}.
\end{itemize}

\begin{df}
A 5-tuple $\mathcal{A}=(Q,\Sigma,\delta,q_0,F)$ is called a \textit{deterministic finite automaton} (DFA for short) iff it consists of
\begin{itemize}
\item a nonempty finite set of \textit{states} $Q$,
\item a nonempty finite set of input letters $\Sigma$ called the \textit{alphabet},
\item a \textit{transition function} $\delta : Q \times \Sigma \rightarrow Q$,
\item an \textit{initial state} $q_0 \in Q$,
\item a set of \textit{final} (\textit{accept}, \textit{terminal}) states $F \subset Q$.
\end{itemize}
The function $\delta$ can be extended on the set of words $\Sigma^*$, abusing the notation we use $\delta$ for the extension.

For a word $w$ consisting of a single letter we use the definition above. For $q \in Q$ we define $\delta(q,\varepsilon)=q$. For a word $w \in \Sigma^*$ of the form $w=au$, where $a\in \Sigma$, $u \in \Sigma^*$ we define $\delta(q,w)=\delta(\delta(q,a),u)$.

We say that a word $w \in \Sigma^*$ is \textit{accepted} (by $\mathcal{A}$) iff $\delta(q_0,w)\in F$, otherwise we say that it is \textit{rejected} (by $\mathcal{A}$).

An automaton  $\mathcal{A}$ is called \textit{minimal} iff there is no automaton $$\mathcal{A}'=(Q',\Sigma',\delta',q_0',F')$$ with $|Q'|<|Q|$ such that the sets of words accepted by $\mathcal{A}$ and $\mathcal{A}'$ are equal. For any automaton there exists, up to isomorphism, exactly one minimal automaton which accepts the same set of words (see e.g. \cite{hopcroft}).

For $Q'\subset Q$ and $q_1\in Q$ we say that a word $w \in \Sigma^*$ is \textit{synchronizing} set $Q'$ to state $q_1$ iff $\forall_{q\in Q'} \delta(q,w)=q_1$.

We call a state $q \in Q$ a \textit{sink state} (or a \textit{sinkhole state}) iff for any $a \in \Sigma$ we have $\delta(q,a)=q$. If a sinkhole $q$ is a final state we call it a \textit{positive sinkhole} and we call it a \textit{negative sinkhole} otherwise.
\end{df}

When talking about automatons it is often convenient to use their graph representation, below there is an example of an automaton $$\mathcal{A}=(\{1,2,3,\text{A},\text{X}\},\{a,b\},\delta,1,\{A\}).$$ The function $\delta$ is depicted by arrows, the initial state is the only one with an arrow pointing to it that is not a transition, and a double circle around a node represents final states. That exact automaton is used in \cite{roman}, which we will also use as an example later.

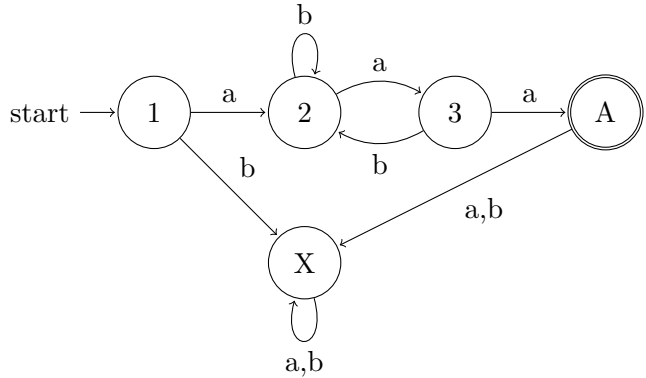
\begin{figure}[h]
    \centering
\begin{tikzpicture}[shorten >=1pt,node distance=2cm,on grid,auto] 

   \node[state,initial] (q_1)   {1}; 
   \node[state] (q_2) [right=of q_1] {2}; 
   \node[state] (q_3) [right=of q_2] {3}; 
   \node[state,accepting](q_4) [right=of q_3] {A};
   \node[state] (q_5) [below=of q_2] {X}; 
	\path[->] 
    (q_1) edge  node {a} (q_2)
    (q_2) edge  [bend left] node {a} (q_3)
          edge [loop above] node {b} ()
    (q_3) edge  node {a} (q_4)
          edge [bend left] node {b} (q_2)
    (q_1) edge  node {b} (q_5)
    (q_4) edge  node {a,b} (q_5)
    (q_5) edge [loop below] node {a,b} ();
\end{tikzpicture}

    \caption{Example automaton from \cite{roman}}
    \label{picture_roman}
\end{figure}

\section{preliminaries}

Firstly, we define what we mean by a fault in an automaton. Let $\mathcal{A}=(Q,\Sigma,\delta,q_0,F)$ be a DFA, and assume that its implementation (SUT - system under testing) represents $\mathcal{A}'=(Q,\Sigma,\delta',q_0,F)$, a DFA that may be different. When thinking about possible differences between $\mathcal A$ and $\mathcal A'$, there are a few natural candidates, for example:
\begin{itemize}
\item an incorrect transition, i.e. $\delta(q,a) \neq \delta'(q,a)$ for some $q\in Q$, $a \in \Sigma$,
\item a missing transition (first possible interpretation), i.e. after $\delta'(q,a)$ is used for some $q\in Q$, $a \in \Sigma$, the word being read will be rejected,
\item a missing transition (second possible interpretation), i.e. after $\delta'(q,a)$ is used, the state will not change, when in $\mathcal{A}$ it does; this is covered by our first candidate, an incorrect transition.
\end{itemize}
If we ease the requirements, so not only transition functions differ we can also consider:
\begin{itemize}
    \item incorrect initial state,
    \item incorrect set of accepting states,
    \item incorrect set of states;
\end{itemize}
but such cases are out of the scope of this paper.

Consider an incorrect transition first. If $\mathcal{A}$ is not minimal, we may not be able to conclude whether the fault occurred. For example, if $F$ is empty (or $F=Q$) and $|Q|>1$, then all words are rejected (or accepted), an incorrect transition (or, for that matter, any number of incorrect transitions) will not change that. On the other hand, we have the following.

\begin{prop}\label{propMin}
A single incorrect transition in a minimal DFA can always be detected.
\end{prop}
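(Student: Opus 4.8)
The plan is to show that the faulty implementation $\mathcal{A}'$ recognizes a different language than $\mathcal{A}$, i.e. to exhibit a word on which they disagree. Write the single faulty transition as $\delta(q,a)=r\neq s=\delta'(q,a)$, so $\delta$ and $\delta'$ agree everywhere except at the pair $(q,a)$. Recall that in a minimal DFA every state is reachable from $q_0$ and any two distinct states are separated by some word. Let $u$ be a \emph{shortest} word with $\delta(q_0,u)=q$; a shortest path visits $q$ only at its end, hence never departs $q$ along $a$ and so does not use the faulty transition. Consequently $\delta'(q_0,u)=\delta(q_0,u)=q$ as well, and reading one more letter $a$ the two runs split: $\mathcal{A}$ moves to $r$ while $\mathcal{A}'$ moves to $s$. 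Thus it suffices to find a word $v$ with $\delta(r,v)\in F \not\Leftrightarrow \delta'(s,v)\in F$, for then $w=uav$ is the desired witness.

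The naive idea is to take, using minimality, a word $v$ that distinguishes $r$ and $s$ \emph{inside} $\mathcal{A}$ (so exactly one of $\delta(r,v),\delta(s,v)$ lies in $F$) and hope that running $v$ from $s$ in $\mathcal{A}'$ yields the same verdict as in $\mathcal{A}$. This succeeds precisely when the run of $v$ started at $s$ never traverses the faulty transition again. The main obstacle is exactly this possible \emph{re-triggering} of the fault: the distinguishing experiment may pass through $q$ and read $a$, after which $\mathcal{A}$ and $\mathcal{A}'$ diverge once more and the comparison breaks down. Equivalently, the difficulty is concentrated on the states that are reachable from $q_0$ \emph{only} through the faulty edge.

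To obtain a clean argument I would instead proceed by contradiction. Assume $\mathcal{A}$ and $\mathcal{A}'$ accept the same language $L$. Since $\mathcal{A}$ is the minimal automaton for $L$ and $\mathcal{A}'$ has the same number of states while recognizing $L$, the automaton $\mathcal{A}'$ is also minimal; by uniqueness of the minimal DFA there is an isomorphism $\psi\colon\mathcal{A}'\to\mathcal{A}$, so $\psi(q_0)=q_0$, $\psi(F)=F$ and $\psi(\delta'(p,b))=\delta(\psi(p),b)$. Let $R_0$ be the set of states reachable from $q_0$ without using the edge $(q,a)$; each such state is reached by one and the same word in both automata, so the equivalence assumption forces $\psi$ to fix $R_0$ pointwise, and $q\in R_0$ by the shortest-path remark above. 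The remaining states $T=Q\setminus R_0$ are exactly those reachable only through the fault; each of them differs from $q$, so on $T$ the function $\delta'$ coincides with $\delta$ and the isomorphism relation reads $\psi(\delta(p,b))=\delta(\psi(p),b)$ for $p\in T$. I would then show $\psi$ fixes $T$ as well, by induction on the length of a shortest word separating $p$ from $\psi(p)$ in $\mathcal{A}$: the empty word cannot separate them because $\psi$ preserves $F$, and peeling off the first letter reduces either to a pair inside $R_0$ (impossible, since a state cannot be separated from itself) or to a strictly shorter instance inside $T$. Hence $\psi=\mathrm{id}$, and applying the relation at $(q,a)$ gives $s=\delta'(q,a)=\delta(q,a)=r$, contradicting $r\neq s$. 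The heart of the matter, and the step I expect to be most delicate, is controlling the states reachable only through the faulty edge, which is precisely what the re-triggering phenomenon in the naive approach fails to handle.
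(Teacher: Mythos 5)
Your proof is correct, but it takes a genuinely different route from the paper's. The paper argues by state merging: writing the fault as $\delta(q,a)=q_1$, $\delta'(q,a)=q_2$ (your $r,s$), it deduces from undetectability that $\{w\in\Sigma^*:\delta(q_1,w)\in F\}=\{w\in\Sigma^*:\delta'(q_2,w)\in F\}$, then removes $q_2$ and redirects every transition into $q_2$ toward $q_1$, obtaining an automaton with fewer states accepting the same language --- contradicting minimality straight from its definition. You instead note that $\mathcal{A}'$ is itself minimal, invoke the uniqueness of the minimal DFA to obtain an isomorphism $\psi\colon\mathcal{A}'\to\mathcal{A}$, and show $\psi=\mathrm{id}$ (first on the states reachable without the faulty edge, then on the remaining states by induction on shortest separating words), whence $\delta'(q,a)=\delta(q,a)$, a contradiction. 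As for what each approach buys: the paper's proof is much shorter and needs only the definition of minimality rather than the uniqueness theorem, but it is terse precisely where you are careful --- its displayed equality compares a right language of $q_1$ \emph{in} $\mathcal{A}$ with a right language of $q_2$ \emph{in} $\mathcal{A}'$, and justifying that the merged automaton accepts the same language as $\mathcal{A}$ requires controlling runs out of $q_2$ that cross the faulty edge again, which is exactly the ``re-triggering'' phenomenon you isolate via the set $T$; moreover, your observation that a shortest word into $q$ avoids the faulty edge is also implicitly needed in the paper's first step (otherwise one cannot conclude that the two automata sit in $q_1$ and $q_2$, respectively, after reading $ua$). So your argument is longer and leans on a cited structural theorem, but every delicate point is addressed explicitly, whereas the paper's compresses those points into one sentence.
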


\begin{proof}
Let $\mathcal{A}=(Q,\Sigma,\delta,q_0,F)$ be a DFA, and $\mathcal{A'}=(Q,\Sigma,\delta',q_0,F)$ be its implementation with a single incorrect transition. Assume that that fault cannot be detected. There exist $q,q_1, q_2 \in Q$, $a \in \Sigma$ such that $\delta(q,a)=q_1$ and $\delta'(q,a)=q_2$, where $q_1 \neq q_2$. Since the fault cannot be detected, then $\{w\in \Sigma^* : \delta(q_1,w)\in F\}=\{w\in \Sigma^* : \delta'(q_2,w)\in F\}$. It follows that if we remove $q_2$ from $Q$ and replace all transitions to $q_2$ with transitions to $q_1$ (possibly changing the initial state to $q_1$, if $q_2$ was initial), we will get an automaton with fewer states that accepts the same language as $\mathcal{A}$, a contradiction with minimality.
\end{proof}

That is the reason why, from now on, we assume that our DFA is minimal. In \cite{roman} we read that the first interpretation of a missing transition can also be realised by an incorrect transition. That is not always the case however. Imagine a DFA with only one state that is final. An incorrect transition in that case is not even possible, but a missing transition (first interpretation) is. To be more precise, such realisation is possible iff $\mathcal A$ has a negative sinkhole.

Going forward, we assume that the only possible \textit{fault} is an incorrect transition.

\section{finding the minimal test set}
Let us state our goal as a decision problem.
\begin{prob} \label{main_problem}
For a given integer $k$, a DFA $\mathcal{A}$ and its implementation $\mathcal{B}$ that is either the same as $\mathcal{A}$ or differs from $\mathcal{A}$ by a single fault, verify the existence of a set of at most $k$ words that can detect is $\mathcal{B}$ differs from $\mathcal{A}$.
\end{prob}

For this part we need to know that each DFA corresponds (one-to-one) with a regular expression that represents the language accepted by that DFA.

\begin{thrm}[Kleene, \cite{kleene}]
A language is regular if and only if there exists DFA, which accepts this language.
\end{thrm}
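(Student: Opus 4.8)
The plan is to prove both implications of the equivalence, adopting the standard inductive definition of a regular expression over $\Sigma$ (built from $\emptyset$, $\varepsilon$, and single letters $a\in\Sigma$ using union, concatenation, and Kleene star), so that a language counts as \emph{regular} precisely when it is denoted by some such expression.

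For the direction ``regular $\Rightarrow$ accepted by a DFA'', I would argue by structural induction on the regular expression. The base cases $\emptyset$, $\varepsilon$, and $a$ are witnessed by explicit two- or three-state machines. For the inductive step it is cleanest to work with nondeterministic automata permitting $\varepsilon$-transitions: given automata for languages $L_1$ and $L_2$, Thompson's construction produces automata for $L_1\cup L_2$, $L_1 L_2$, and $L_1^{*}$ by adjoining new initial and accepting states together with suitable $\varepsilon$-edges. Having obtained an $\varepsilon$-NFA for the whole expression, I would convert it to a DFA by the subset construction, taking as states those subsets of the NFA's state set reachable from the $\varepsilon$-closure of the start state; a routine induction on word length then shows that the resulting DFA accepts exactly the same language.

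For the converse ``accepted by a DFA $\Rightarrow$ regular'', I would use the classical Kleene recurrence. Fixing $\mathcal{A}=(Q,\Sigma,\delta,q_0,F)$ with $Q=\{q_1,\dots,q_n\}$ and assuming without loss of generality $q_0=q_1$, define $R^{(k)}_{ij}$ to be the set of words that drive $\mathcal{A}$ from $q_i$ to $q_j$ while passing only through intermediate states among $\{q_1,\dots,q_k\}$. The base case $R^{(0)}_{ij}$ is a finite set of single letters (together with $\varepsilon$ when $i=j$), hence plainly denoted by a regular expression, and the recurrence $R^{(k)}_{ij}=R^{(k-1)}_{ij}\cup R^{(k-1)}_{ik}\bigl(R^{(k-1)}_{kk}\bigr)^{*}R^{(k-1)}_{kj}$ shows by induction on $k$ that every $R^{(k)}_{ij}$ is regular, since only union, concatenation, and star appear on the right. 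The language accepted by $\mathcal{A}$ is then $\bigcup_{q_j\in F} R^{(n)}_{1j}$, a finite union of regular expressions and therefore regular.

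The genuinely delicate point is this converse direction: one must verify the recurrence for $R^{(k)}_{ij}$ with care, checking that every run through intermediate states in $\{q_1,\dots,q_k\}$ decomposes uniquely into a segment that avoids $q_k$ as an intermediate state, followed by zero or more loops through $q_k$ (each avoiding $q_k$ internally), followed by a final segment reaching $q_j$, and that this decomposition matches the right-hand side exactly. Everything else — the base automata, Thompson's gadgets, and the correctness of the subset construction — is routine bookkeeping.
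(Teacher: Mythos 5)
Your argument is correct, but there is nothing in the paper to compare it against: the paper does not prove this statement at all. It is quoted as a classical result (Kleene's theorem) with a citation to Kleene's 1956 paper, and is used only as a bridge between DFAs and regular expressions in the subsequent NP-completeness reduction. What you have written is the standard textbook proof — Thompson's construction of an $\varepsilon$-NFA followed by the subset construction for the direction ``regular $\Rightarrow$ accepted by a DFA'', and the recurrence $R^{(k)}_{ij}=R^{(k-1)}_{ij}\cup R^{(k-1)}_{ik}\bigl(R^{(k-1)}_{kk}\bigr)^{*}R^{(k-1)}_{kj}$ on the set of permitted intermediate states for the converse — and both halves are sound as sketched; this is essentially the proof found in the Hopcroft--Motwani--Ullman text that the paper cites elsewhere. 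Two small remarks. First, for the converse you do not actually need the decomposition at visits to $q_k$ to be \emph{unique}; language equality only requires the two inclusions, though splitting a run at every occurrence of $q_k$ as an intermediate state does give a canonical decomposition, so your stronger claim is harmless and true. Second, since the paper's definition of ``regular'' is ``denoted by a regular expression'', your choice of the inductive definition of regular expressions is the right starting point and matches the paper's usage, so there is no mismatch of conventions.
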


Consider $\mathcal{A}=(Q,\Sigma,\delta,q_0,F)$, a DFA. For every $a\in \Sigma$ and $q, q' \in Q$ such that $\delta(q,a) \neq q'$ we can construct an automaton with a single fault $\mathcal{A}_{(q,a,q')}=(Q,\Sigma,\delta_{(q,a,q')},q_0,F)$ such that $\delta_{(q,a,q')}(q,a)=q'$ is its only fault. For any such automaton we can consider $$\widehat{\mathcal{A}_{(q,a,q')}}=(Q\times Q,\Sigma,\widehat{\delta_{(q,a,q')}},(q_0,q_0),F\times (Q \setminus F) \cup (Q \setminus F)\times F) ,$$
where $\widehat{\delta_{(q,a,q')}}((q_1,q_2),w) = (\delta(q_1,w),\delta_{(q,a,q')}(q_2,w))$, which accepts a non\-em\-pty language (by Proposition \ref{propMin}) of words that recognize the fault $$\delta_{(q,a,q')}(q,a)=q',$$ because these are precisely the words that are accepted by $\mathcal{A}$ and not $\mathcal{A}'$, or the other way around. That way, out problem can be restated as finding a minimal set of words for a given finite set of regular expressions (i.e. languages accepted by the automatons $\widehat{\mathcal{A}_{(q,a,q')}}$), such that for each given regular expression at least one word from our set satisfies it. Unfortunately, we have the following. 

\begin{prop}
Given a finite set of regular expressions $R$ and a positive integer $k$, the problem if there exists a set of words $W$, such that $|W| \leqslant k$ and a language represented by each regular expression in $R$ has a nonempty intersection with $W$ is NP-complete.
\end{prop}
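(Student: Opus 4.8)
The plan is to prove NP-completeness by treating the two halves separately: first a reduction establishing NP-hardness from a classical covering problem, then membership in NP via a guess-and-verify certificate. The clean direction is the hardness one, so I would present it first as the conceptual core.

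For NP-hardness I would reduce from \textsc{Hitting Set} (equivalently \textsc{Set Cover}). Given a universe $U=\{u_1,\dots,u_m\}$, a family of sets $C_1,\dots,C_n\subseteq U$, and a bound $k$, I take the alphabet $\Sigma=U$, one letter per universe element, and for each set $C_i$ the regular expression $r_i$ whose language is the finite set of one-letter words $\bigcup_{u\in C_i} u$, so that $L(r_i)=\{u : u\in C_i\}$. This is computable in polynomial time, and $R=\{r_1,\dots,r_n\}$ together with the same $k$ is the target instance. The key observation is that every word occurring in any $L(r_i)$ is a single letter, so any candidate solution $W$ may be assumed to consist of single letters, i.e. $W\subseteq U$; and then $W\cap L(r_i)\neq\emptyset$ for all $i$ holds exactly when $W$ meets every $C_i$. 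Hence a word-set of size at most $k$ hitting all the $r_i$ exists if and only if the original instance admits a hitting set of size at most $k$. Because the reduced languages are finite (single letters), no word-length subtleties intrude on this direction, so the reduction is airtight.

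For membership in NP the natural certificate is the set $W$ itself: one checks $|W|\le k$ and that, for each $r\in R$, some $w\in W$ matches $r$, which is polynomial-time regular-expression matching. The delicate point — and the step I expect to be the main obstacle — is bounding the \emph{lengths} of the words in $W$ so that the certificate has polynomial size. Each word $w\in W$ is responsible for the subfamily $S_w=\{r\in R : w\in L(r)\}$, so one would like to replace $w$ by a shortest word of $\bigcap_{r\in S_w}L(r)$; but the product automaton recognising such an intersection has a number of states equal to the product of the sizes of the individual automata, and the only immediate length bound coming from it is exponential. I would therefore need a separate argument — exploiting the concrete structure of the languages actually arising here, namely the finitely many product automata $\widehat{\mathcal A_{(q,a,q')}}$, each on $|Q|^2$ states — to guarantee that whenever a solution exists one exists with polynomially bounded words. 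Producing that length bound, and thereby genuinely placing the problem in NP rather than merely in PSPACE, is the crux that the proof must resolve.
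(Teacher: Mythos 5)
Your NP-hardness half is correct and is essentially the paper's own argument: the paper reduces from Dominating Set, building for each vertex $v_i$ the regular expression that is the sum of $v_i$ and its neighbours, which is precisely your Hitting Set construction specialised to the family of closed neighbourhoods. Both reductions produce languages consisting only of single-letter words, and the correctness argument (any useful word in $W$ may be assumed to be a single letter) is identical.

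The divergence is in the NP-membership half, and there you have put your finger on a genuine gap --- not merely in your own write-up, but in the proposition and in the paper's proof of it. The paper disposes of membership in one sentence: given $W$, one checks in polynomial time that each $r \in R$ matches some word of $W$. This tacitly assumes that the certificate $W$ has size polynomial in the input, which is exactly the point you flag as unresolved; and it cannot be resolved in general. Already for $k=1$ the problem asks whether $\bigcap_{r \in R} L(r) \neq \emptyset$, i.e.\ intersection nonemptiness for regular expressions, which is PSPACE-complete (Kozen's construction for automata intersection adapts to polynomial-size regular expressions). Shortest witnesses really are exponentially long in the worst case: for $r_i = (a^{p_i})^+$, with $p_1,\dots,p_n$ the first $n$ primes, the input has polynomial size but the unique shortest word in $\bigcap_i L(r_i)$ is $a^{p_1 p_2 \cdots p_n}$, of exponential length. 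So the guess-and-verify argument fails outright, and no alternative certificate can rescue it unless $\mathrm{NP}=\mathrm{PSPACE}$. The honest conclusion is that the proposition as stated proves only NP-hardness; membership in NP would require restricting $R$ (for instance to the single-letter languages of the reduction, or to the languages of the automata $\widehat{\mathcal{A}_{(q,a,q')}}$ arising in the paper, together with an explicit witness-length bound, which nobody has supplied). In short: your hardness reduction matches the paper's, your refusal to wave hands at the membership step is correct, and the ``crux'' you identify is not an omission on your part but an error that the paper's own proof glosses over.
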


\begin{proof}
If we are provided with a set $W$ of at most $k$ words it is easy to check in polynomial time whether each regular expression in $R$ represents a language that has at least one word in $W$, therefore the problem is in NP.

To show that it is NP-complete, we will reduce a dominating set problem to it. Recall that a dominating set of a graph is a subset $A$ of its vertices such that any vertex is either in $A$ or has a neighbour in $A$. Let $G=(V,E)$ be a graph such that $V=\{v_1,\ldots,v_n\}$. We construct a set of regular expressions as follows. For $i \in \{1,\ldots ,n\}$ let $r_i$ be a regular expression over the alphabet $V$ such that it is a sum of $v_i$ and all its neighbours in $G$. We need to show that a vertex cover $V' \subset V$ such that $|V'| \leqslant k$ exists iff there exists a set of words $W'$ such that $|W'|\leqslant k$ and each regular expression from $\{r_1,\ldots,r_n\}$ is represented.

Indeed, if we can represent all regular expressions ${r_1,\ldots,r_n}$ with a set of words $|W'|$ such that $|W'|\leqslant k$ (note that each word must be a single letter $v_i$), then a corresponding set $|V'|$ of vertices with labels corresponding to $W'$ is a vertex cover and $|V'|=|W'|$.

On the other hand, given a vertex cover $V'$, if we take $W'$ as letters $v_i$ such that a vertex with that label is in $V'$ we get a set of words as needed.

As dominating set problem is NP-complete (problem GT2 in \cite{garey}), that concludes the proof.
\end{proof}

That means of course, that we can find a smallest set of words that recognize a fault in DFA, but unless P=NP, proceeding as described above will not produce the needed set in polynomial time. We propose the following.

\begin{prob}
Decide whether problem \ref{main_problem} is NP-complete.
\end{prob}

\section{heuristic algorithm}
\subsection{Algorithm proposed by Roman}

The algorithm presented in \cite{roman} is the following:
\begin{algorithm}
\caption{ GENERATE from \cite{roman} \protect\\ 
\protect\textbf{Input}: a finite automaton $\mathcal A = (Q,\Sigma,\delta_M,q_0,F)$ \protect\\
\protect\textbf{Output}: a set $T \subset \Sigma^*$ of test cases that detect all possible "incorrect transition" faults }
\label{algorithm_roman}
\begin{algorithmic}[1]
\State $E = \{(q,a) \in Q \times \Sigma\},\ T = \varnothing$
\While {$E \neq \varnothing$}
\State \textbf{find} a path $p$ such that $\delta_M(q_0,\sigma (p)) \in F$ and which maximizes $|e(p) \cap E|$ and for which $e(p) \cap E \neq \varnothing$.
\If {such $p$ exists}
\For {$(r,a) \in e(p)$}
\State $E = E \setminus \{(r,a)\}$
\State \textbf{let} $\sigma(p)=uw$, where $\delta_M(q_0,u)=r$
\ForAll{$q \in Q \setminus \{\delta_M (r,a)\}$} 
\State $\delta_M' = \delta_M$, $\delta_M'(r,a)=q$
\If {$\delta_M'(q_0,\sigma(p)) \in Q \setminus F$}
\State $T = T \cup \sigma(p)$
\Else
\State \textbf{find} $w' \in \Sigma^*$ and $b \in \Sigma \setminus \{a\}$ such that only one of states $\delta_M(q_0,ubw'), \delta_M'(q_0,ubw')$ is a terminal state.
\State $T = T \cup \{ubw'\}$
\EndIf
\EndFor
\EndFor
\ElsIf{$E \neq \varnothing$}
\State \textbf{find} a path $p$ that covers as many elements from $E$ as possible (ending in $Q \setminus F$)
\ForAll{$(r,a) \in e(p)$}
\State $E = E \setminus \{(r,a)\}, S= Q \setminus \delta(r,a)$
\State \textbf{let} $\sigma(p)=uw$, where $\delta_M(q_0,u)=r$
\While{$S \neq \varnothing$}
\State \textbf{find} a word $w$ that synchronizes to some accepting state from $F$ as many states from $Q\setminus \delta(r,a)$ as possible and remove them from $S$
\State $T = T \cup \{uaw'\}$
\EndWhile
\EndFor
\EndIf
\EndWhile
\State \Return T
\end{algorithmic}
\end{algorithm}

As for the terminology used there (following \cite{roman}), a path $p$ is a sequence of states $p=(q_1,q_2,\ldots,q_k)$ such that for each $i \in \{1,2,\ldots,k-1\}$ there exists $a_i \in \Sigma$ such that $\delta_M(q_i,a_i)=q_{i+1}$ (it corresponds of course to a directed path in a transition graph for $\mathcal{A}$). The letters $a_i$ form a word $a_1 \ldots a_{k-1}$ which is denoted by $\sigma(p)$ and a sequence of pairs $(p_1,a_1) \ldots (p_{k-1},a_{k-1})$ is denoted by $e(p)$. We will later use the same notation in our algorithm.

It will be helpful to summarize how the algorithm works, as later we will do the same for our version, the conceptual differences should be easier to see that way. We first find a path through the transition graph ending in a final state that covers the most transitions possible. We then explore all possible faults that may cause a deviation in the path taken through the transition graph with the same input word. If the deviation causes the automaton to end in a state that is not final, then the original word is "good" for finding that particular fault. If not (i.e. even though there was a deviation, we still ended up in a final state), then our word is not "good" for that particular fault, and we just find a word that is. After that we can repeat the procedure, but considering all the transitions our word passed through as "done". That way we can eventually label on transitions that can be used as part of a path from an initial to a final state. That, of course, may still not be the end, since all the words we find take us to a final state, there can still be transitions left that are not covered by the procedure thus far (for example, those leading to a negative sinkhole). We therefore find a path again covering the most transitions not yet "done" (necessarily we end in a non-final state). Now we have a similar situation to the one before, either a single deviation will send us to a final state (in which case that one is covered by the word we are in process of analyzing), or not. Here however, when considering a certain state and possible faults that have it as a starting point, that result in paths leading to a non-final state, we attempt to lower the number of words using synchronization. To be more precise, if considering a path $p$ through a state $q$ we find that changing a single transition can result in multiple paths ending in a non-final state, instead of adding a word for each case we use synchronization to find one word that synchronizes as many states that can be reached just after the fault as possible into one final state.

Unfortunately, there are a few problems with this algorithm.
\begin{enumerate}
\item In line 13 of the algorithm $ubw'$ should be changed to $uaw'$, $b$ is not needed, indeed as written, algorithm may not work.
\item In line 21, should be noted what automaton that synchronization applies to (what about the incorrect transition? we are synchronizing across multiple possibilities for it).
\item There is no exact procedure for "find a path" in 3. Even though there can be more than one path covering a maximal number of transitions, the final result may depend on which is chosen.
\item If we aim to deliver the smallest possible set of words, why use the optimization in only one of two cases? It is easy to see that a minimal automaton remains minimal after changing all non-final states into final and vice-versa. It is therefore peculiar that when considering paths that end in a final state we do not optimize by synchronization as in the case of paths that end in a non-final state, even though they are, in a sense, symmetric.
\end{enumerate}

There are unfortunately also problems in the example shown in \cite{roman} (using the algorithm on a concrete automaton, the one from figure \ref{picture_roman}). The words are wrong (probably because $\delta_M$ and $\delta_M'$ got mixed up). If we assume that the synchronization is done in such a way that the incorrect transition is not used, then the final set achieved by this approach may be for example
\begin{align*}
    \{ababaa,aa,bbaa,baaa,&b,baaaa,babaa,ba,babaaa,babbaa,bab, \\ &aaaaaaa,aaaabaa,aaaa,aaabaaa,aaabaa,aaab\}.
\end{align*}
What set we obtain does depends on the approach used in "find" statements (as we suggested earlier). For example, if the first word we try is $aabbaa$ instead of $ababaa$ we no longer need $aa$.

We will now proceed to describe possible improvements to the algorithm.

\subsection{Improvements to the algorithm (without synchronization)} As an improvement to algorithm \ref{algorithm_roman} we propose algorithm \ref{algorithm_ours}. We will describe how it works just like we did for algorithm \ref{algorithm_roman}.

\begin{algorithm}
\caption{ Generating test cases (without synchronization)
\protect\\ 
\protect\textbf{Input}: a finite automaton $\mathcal A = (Q,\Sigma,\delta_M,q_0,F)$ \protect\\
\protect\textbf{Output}: a set $T \subset \Sigma^*$ of test cases that detect all possible "incorrect transition" faults }
\label{algorithm_ours}
\begin{algorithmic}[1]
\State $E = \{(q,a) \in Q \times \Sigma\},\ T = \varnothing$
\While {$E \neq \varnothing$}
\State \textbf{find} a path $p$ such that $\delta_M(q_0,\sigma (p)) \in F$ and which maximizes $|e(p) \cap E|$ and for which $e(p) \cap E \neq \varnothing$.
\If {such $p$ exists}
\State $T_{acc} = \varnothing, T_{rej} = \varnothing$
\For {$(r,a) \in e(p)$, from the last edge of $p$ to the first}
\State $E = E \setminus \{(r,a)\}$
\State \textbf{let} $\sigma(p)=uw$, where $\delta_M(q_0,u)=r$
\For {$q \in Q \setminus \{\delta_M (r,a)\}$} 
\State $\delta_M' = \delta_M$, $\delta_M'(r,a)=q$
\If {$\delta_M'(q_0,\sigma(p)) \in Q \setminus F$}
\State $T = T \cup \sigma(p)$
\Else
\If {$\delta_M'(q_0,T_{acc}) \cap Q \setminus F \neq \varnothing$ \textbf{or} $\delta_M'(q_0,T_{rej}) \cap F \neq \varnothing$} do nothing
\Else
\If {there exists $w' \in \Sigma^*$ such that $\delta_M(q_0,uaw') \in F$ and $\delta_M'(q_0,uaw') \in Q \setminus F$}
\State $T_{acc} = T_{acc} \cup \{uaw'\}$
\Else
\If {there exists $w' \in \Sigma^*$ such that $\delta_M(q_0,uaw') \in Q \setminus F$ and $\delta_M'(q_0,uaw') \in F$}
\State $T_{rej} = T_{rej} \cup \{uaw'\}$
\Else
\State $E = E \cup \{(r,a)\}$ and consider the edge again when it appears in the list of edges of $p$
\EndIf
\EndIf
\EndIf
\EndIf
\EndFor
\EndFor
\State $T = T \cup T_{acc} \cup T_{rej}$
\Else
\If {$E \neq \varnothing$}
\State \textbf{find} a path $p$ that covers as many elements from $E$ as possible (ending in $Q \setminus F$) and continue similarly to the previous case
\EndIf
\EndIf 
\EndWhile
\State \Return $T$
\end{algorithmic}
\end{algorithm}

The very beginning is the same, we start with a path through the transition graph covering the maximal possible number of transitions, that end in a final state, let us denote the corresponding word by $w$. Since what we want to minimize is the number of words in the final test set, we then proceed from the longest to the shortest prefix of $w$ when considering the possible faults that may affect $w$. The reason is, if we find a word that is accepted in $\mathcal{A}$, but rejected when a given fault occurs after a long prefix, it may also work when considering possible faults occurring after shorter prefixes.

Note that in line 22 we are indeed able to consider the edge again later, because we can only reach this instruction if the transition occured more than once in $p$.

After we exhaust words that are accepted by $\mathcal{A}$, since we are left with a case that is essentially symmetric to the one just done (when in a minimal automaton we change all final states to non-final and vice versa, the resulting automaton is still minimal), we can proceed in a way mirroring the "accepted words" case. In particular that means that we can try to use this algorithm on the "complemented" DFA and check whether it yields a smaller set of test cases (of course we also need to swap acceptance and rejection for the implementation, but that can be done after the implementation actually gives us a result).

That also means that we can use synchronization to try limiting the size of our final set further in both "accepting" and "rejecting" cases, not just in one like in algorithm \ref{algorithm_roman}.

\subsection{Improvements to the algorithm (with synchronization)}

When we consider a transition and where it might lead if the fault replacing this exact one, some options are already taken care of by the word chosen for this transition (like we discussed before, we may reach a final state where the original word was rejected or vice-versa). Instead of choosing a word for each of the other options, we can try to "synchronize" to a rejecting state if the word chosen for the transition is accepted, or to an accepting state otherwise. In \cite{roman} it was done only in a latter case. It was not stated explicitly in \cite{roman} but can be presumed that the synchronization was done on the automaton with the transition considered removed, starting from a set of states that were all options for where the fault may lead, with the ones that were already covered by the word omitted.

It can be done better, however. Consider a word of the form $ua$, where $a \in \Sigma$ is the letter that in $\mathcal{A}$ corresponds to the transition we have yet to cover in an accepted word $uaw$. For all the options not covered by the words so far, we can try to find a small set of words $W'$ such that members of the set $\{uaw' : w' \in W'\}$ cover all of those options. We can do this by constructing regular expressions for each of the options that reject the word if started there and try to find a smallest possible set of words such that each of the expressions has a representative. Since we do not want to find a minimal set for this problem (exponential complexity), we can, for example, do it greedily. Synchronization can be used here as a preliminary, because if a set of states can be synchronized, the language of words satisfying all of their corresponding regular expressions is nonempty. 

\subsection{Using synchronization to avoid masking}
In both algorithms we find paths $p$ which maximize $|e(p)\cap E|$. There can, of course, be more then one such path. What is more, when considering two paths, using one of them can be better for our purposes then using the other. 

For a path $p$ in the transition graph of $\mathcal{A}$, and an edge $x\xrightarrow{a}y$ in $p$, we say that $p$ is \textit{masking} a faulty transition $x\xrightarrow{a}z$ if $y \neq z$ and replacing $x\xrightarrow{a}y$ with $x\xrightarrow{a}z$ in $\delta_M$ does not change $\sigma(p)$ being accepted by $\mathcal{A}$ (i.e. it is accepted both before and after or rejected both before and after introducing the fault).

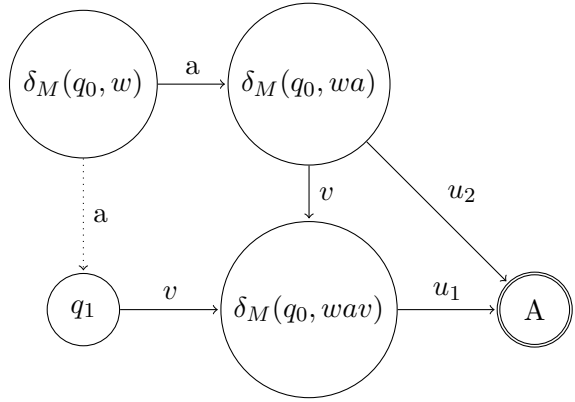
\begin{figure}[h]
    \centering
    \begin{tikzpicture}[shorten >=1pt,node distance=3cm,on grid,auto] 
   
   \node[state] (q_2) {$\delta_M(q_0, w)$}; 
   \node[state] (q_3) [right=of q_2] {$\delta_M(q_0, wa)$}; 
   \node[state](q_4) [below=of q_3] {$\delta_M(q_0, wav)$};
   \node[state] (q_5) [below=of q_2] {$q_1$}; 
   \node[state,accepting] (q_1) [right=of q_4] {A};
	\path[->] 
    (q_2) edge  node {a} (q_3)
    (q_3) edge  node {$v$} (q_4)
    (q_5) edge  node {$v$} (q_4)
    (q_4) edge node {$u_1$} (q_1)
    (q_3) edge node {$u_2$} (q_1);
    \path[->,dotted]
    (q_2) edge node {a} (q_5);
\end{tikzpicture}

    \caption{Example of masking}
    \label{picture_masking}
\end{figure}

It is unrealistic to try and find a path $p$ that maximizes $|e(p)\cap E|$ that does not mask anything, as it would imply that we can cover all possible faulty transitions that disrupt $p$ just by $\sigma(p)$, but we can try to avoid some easy to find maskings.

Consider the example shown in figure \ref{picture_masking}. In the situation shown there, $ \delta_M(q_0, wa)$ and $q_1$ can be synchronized using the word $v$ (suppose that the path $q_1\xrightarrow{v}\delta_M(q_1,v)$ does not use the edge $\delta_M(q_0,w)\xrightarrow{a} \delta_M(q_0,wa)$). In that case any path with a prefix $wav$ is masking, so we may want to avoid it (as we said, it may not be possible to avoid every possible masking). Before generating test cases we can find which pairs of states can be synchronized in $\mathcal{A}$ and which words are used to synchronize them. This can be done by constructing a subautomaton $\mathcal P^{[2]} (A)$ of the power automaton (eg. we take 2- and 1-element subsets of $Q$ as states) and then finding paths from two element states to singletons. For each state $q_1$ we can store a regular expression describing words $w$, which synchronize $q_1$ with some other state. When choosing among paths that maximize $|e(p) \cap E|$ we then prioritize those that are not masking according to what we found.

In Roman's example (figure \ref{picture_roman}) in first iteration of algorithm we can choose two different possible paths $p_1,p_2$ which maximize $|e(p) \cup E|$, namely $\sigma(p_1) = ababaa, \sigma(p_2) = aabbaa$. By looking at the automaton in figure \ref{picture_power} we can easily see that
$$\delta_M(2,b) = \delta_M(3,b) = 2.$$ This is why we do not want to use path with prefix $ab$, as it would mask the faulty transition $1\xrightarrow{a}3$. Hence we should use $p_2$ instead of $p_1$ to avoid the need for another test case recognizing this fault like $aa$ or $aaa$.

We also have $\delta_M(1,ab) = \delta_M(2,ab) = 2$, so we could be masking some transition going into state $1$ or $2$. There are three such transitions, namely $1\xrightarrow{a}2,2\xrightarrow{b}2,3\xrightarrow{b}2$, of which the first two are used in synchronizing 1 and 2 by $ab$. The faulty transition $2\xrightarrow{b}1$ will not be masked by $p_2$, but will be by $p_1$. The fault $1\xrightarrow{a}1$ will make the SUT trivially reject everything. Hence, the only possible error that could be masked because of the synchronization by $ab$ is $3\xrightarrow{b}1$ and this would happen only if our path follows $bab$ from state 3, like in $abababaa$. Again it is better to take $aabbaa$, which also maximizes the number of used edges and does not mask.

We can have a similar situation when considered word rejected, for instance in the example automaton there is a synchronization to the negative sinkhole state ($\delta_M(Q, aaaa) = X)$, but again, if a method words for accepted words, it can be trivially modified to work for rejected ones.

\subsection{Transitions to a sinkhole}
If our automaton has a sinkhole, then it is easy to see that we must have at least one word for each transition leading to it in out test set. Even worse, for each possible fault of each of those transitions we must ensure that we have at least one word that is accepted (for a negative sinkhole) or rejected (for a positive sinkhole) by $\mathcal{A}$. We will describe the procedure for the negative sinkhole, but again, we can easily adapt it for a positive one.

Let $Q = \{1,2,\ldots, n, X\}$ where $X$ is the negative sinkhole. For any given state $q\in Q\setminus \{X\}$ and $x\in \Sigma$ such that $\delta_M(q,x)=X$ we need to test if this transition holds in SUT. To do this, for any $r \in Q\setminus \{X\}$ we can take words $u,w$ satisfying $\delta_M(q_0,u) = q, \delta_M(r,w)\in F$ and then $uxw$ will test for faulty transition $q\xrightarrow{x}r$. This way we generate $n$ test cases for every existing transition from non-sinkhole to sinkhole, but there is a way to use fewer words.

First we find a partition of $Q\setminus \{X\}$ into subsets $S_1,\ldots, S_m$ such that each of them can be synchronized to an accepting state; let $l_i$ be a synchronizing word for $S_i$. Now, for each $q\in Q \setminus \{X\}$ and $x \in \Sigma$ we take a word $w$ with $\delta_M(q_0,w)=q$ and add test cases $wxl_1, \ldots, wxl_m$.

Of course $1 \le m \le n$ and for any automaton there exists the smallest possible value of $m$. Finding this value is NP-hard, but even greedy approach gives better results than adding $n$ test cases for every non-sinkhole-to-sinkhole transition. 

It is worth noting, that there are still sinkhole-to-sinkhole loop transitions that may or may not be covered by this set of test cases. Using this as a preliminary to algorithm \ref{algorithm_ours} we get algorithm \ref{algorithm_ours2}.

\begin{algorithm}
\caption{ Generating test cases
\protect\\ 
\protect\textbf{Input}: a finite automaton $\mathcal A = (Q,\Sigma,\delta_M,q_0,F)$ \protect\\
\protect\textbf{Output}: a set $T \subset \Sigma^*$ of test cases that detect all possible "incorrect transition" faults }
\label{algorithm_ours2}
\begin{algorithmic}[1]
\State \textbf{construct} automaton $\mathcal P^{[2]} (A)$
\State \textbf{find} synchronization patterns for 2-element sets of states
\State $T_{sink} = \varnothing, E_{sink} = \varnothing$
\If{$\exists X\in Q, X$ is negative sinkhole}
\State \textbf{find} partition of $Q\setminus\{X\}$ into $S_1,\ldots, S_m$ and words $l_1, \ldots, l_m$ such, that $l_i$ synchronizes $S_i$ to an accepting state
\For{$q \in Q, x \in \Sigma : \delta(q,x)=X$}
\State \textbf{find} $w\in \Sigma^* : \delta(q_0,w)=q$
\State $T_{sink} = T_{sink} \cup \{wxl_1, wxl_2,\ldots wxl_m\}$
\State $E_{sink} = E_{sink} \cup \{(q,x)\}$
\EndFor
\EndIf
\State \textbf{run} algorithm 2 but with $E = E \setminus E_{sink}$, knowing the synchronization patterns and that some faults on sinkhole-to-sinkhole loops could already be covered by $T_{sink}$
\State \Return $T \cup T_{sink}$
\end{algorithmic}
\end{algorithm}

\subsection{Summary and an example}
To sum up, our proposal for a heuristic polynomial time solution to problem \ref{main_problem} is the algorithm \ref{algorithm_ours}, that can be further modified by adding synchronization, choosing the words more carefully and considering faults of transitions leading to sinkholes first (each described in more detail in their respective subsections). All of the presented procedures assumed that we want to minimize the final set, in practice we may of course be satisfied with a more crude approximation of the optimal solution. If not we may try to improve it and still retain polynomial time. If it is still not enough, the exponential algorithm for finding the optimal solution may always be used if the size of the automaton permits it.

Now we will show how our algorithm (with mentioned possible modifications) works on a concrete automaton, we will reuse the one from \cite{roman}. On figure \ref{picture_power} we see $\mathcal P^{[2]} (\mathcal{A})$ , it is easy to check that the only 2-element sets, that can be synchronized to an accepting state are $\{1,2\} \xrightarrow{abaa}\{A\}, \{2,3\}\xrightarrow{baa}\{A\}$. We pick a partition $S_1=\{1,2\}, S_2=\{3\}, S_3=\{A\}$ and generate test cases
$$T_{sink} = \{babaa, ba, b, aaaaabaa, aaaaa, aaaa, aaababaa, aaaba, aaab\}.$$
We then enter algorithm \ref{algorithm_ours}. In the first pass we pick the word $aabbaa$ instead of $ababaa$, because we know that it would mask something ( $ \{2,3\}\xrightarrow{a}\{2\}$). In the second pass we are only left with $E = \{(X,a), (X,b)\}$, first of them is tested by set $\{baaaa, baaa, baa, ba\}$ and second by $\{bbaaa, bbaa, bba, bb\}$. However, as mentioned before, some faults on this edges could already be covered by $T_{sink}$. In fact $babaa$ covers $\delta_M'(X,a) =2 \vee\delta_M'(X,a) =3 \vee \delta_M'(X,b) =2$ and $ba$ already is included; we only need to add words $baaaa, bbaaa,bba,bb$.

This way, we get a 14 element test set
\begin{align*}
    T = \{babaa, ba, b, aaaaabaa, &aaaaa, aaaa, aaababaa, aaaba, aaab, \\ &aabbaa,baaaa,bbaaa,bba,bb\}.
\end{align*}
In comparison to Roman's approach it saves only 3 test cases, but due to direct computation we know that minimal set for this particular automaton has at least 12 elements.

\begin{figure}
    \centering
\begin{tikzpicture}[shorten >=1pt,node distance=2cm,on grid,auto] 

   \node[state,initial] (q_1)   {1}; 
   \node[state] (q_2) [right=of q_1] {2}; 
   \node[state] (q_3) [right=of q_2] {3}; 
   \node[state,accepting](q_4) [right=of q_3] {A};
   \node[state] (q_5) [below=of q_2] {X};
   \node[state] (q_2x) [below=of q_5] {2,X};
   \node[state] (q_1x) [left=of q_2x] {1,X};
   \node[state] (q_3x) [right=of q_2x] {3,X};
   \node[state] (q_ax) [right=of q_3x] {A,X};
   \node[state] (q_2a) [below=of q_2x] {2,A};
   \node[state] (q_3a) [below=of q_ax] {3,A};
   \node[state] (q_13) [below=of q_1x] {1,3};
   \node[state] (q_23) [right=of q_ax] {2,3};
   \node[state] (q_12) [above=of q_23] {1,2};
   \node[state] (q_1a) [left=of q_13] {1,A};
	\path[->] 
    (q_1) edge  node {a} (q_2)
    (q_2) edge  [bend left] node {a} (q_3)
          edge [loop above] node {b} ()
    (q_3) edge  node {a} (q_4)
          edge [bend left] node {b} (q_2)
    (q_1) edge  node {b} (q_5)
    (q_4) edge [above] node {a,b} (q_5)
    (q_5) edge [loop left] node {a,b} ()
    (q_1x) edge node {a} (q_2x)
           edge node {b} (q_5)
    (q_2x) edge  [bend left] node {a} (q_3x)
           edge [loop above] node {b} ()
    (q_3x) edge  node {a} (q_ax)
           edge [bend left] node {b} (q_2x)
    (q_ax) edge [above] node {a,b} (q_5)
    (q_1a) edge node {a} (q_2x)
           edge [bend left = 90] node {b} (q_5)
    (q_2a) edge [below] node {a} (q_3x)
           edge node {b} (q_2x)
    (q_3a) edge node {a} (q_ax)
           edge node {b} (q_2x)
    (q_13) edge node {a} (q_2a)
           edge node {b} (q_2x)
    (q_12) edge node {a} (q_23)
           edge node {b} (q_2x)
    (q_23) edge node {a} (q_3a)
           edge node {b} (q_2);
\end{tikzpicture}

    \caption{Power automaton for Roman's example}
    \label{picture_power}
\end{figure}
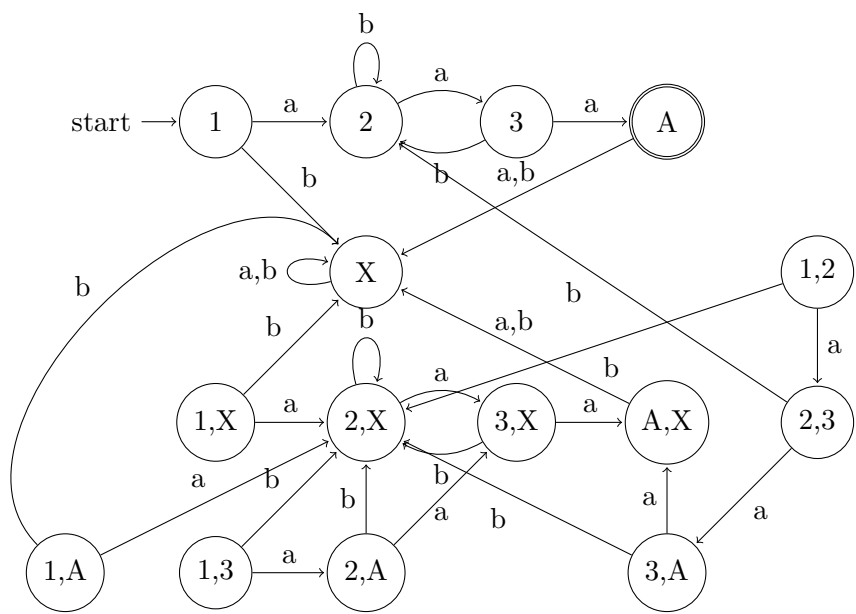

\section{Further studies}
It seems there are more possible ways to improve the heuristic algorithm. One approach may be by exhausting the set of possible faults instead of the set of edges. There is also possibility to extend the algorithm to detect other types of faults.

From theoretical point of view, the question if finding the minimal set of tests is NP-complete remains open (we suspect it is NP-complete).

\end{document}